\documentclass{easychair}

\usepackage{doc}

\usepackage{tikz}
\usepackage{cite}
\usepackage{wrapfig}
\usepackage{amssymb, thmtools, amsthm, amsmath}
\usepackage{skak}
\usepackage{subcaption}
\usepackage{booktabs}
\usepackage{pgfplots}
\usepackage{bbm}
\usepgfplotslibrary{groupplots,statistics}
\pgfplotsset{compat=1.18}

\usepackage{algpseudocode}
\usepackage{algorithm}

\usepackage{cleveref}

\newcommand{\amo}{\textsf{AtMostOne}}
\newcommand{\eo}{\textsf{ExactlyOne}}

\definecolor{qcColor0}{HTML}{174EA6}
\definecolor{qcColor1}{HTML}{E37400}
\definecolor{qcColor2}{HTML}{A50E0E}
\definecolor{qcColor3}{HTML}{6F42C1}
\definecolor{qcColor4}{HTML}{0D652D}

\definecolor{qcOne}{HTML}{66C2A5}   % Soft Teal
\definecolor{qcTwo}{HTML}{FC8D62}   % Soft Orange
\definecolor{qcThree}{HTML}{8DA0CB} % Soft Indigo
\definecolor{qcFour}{HTML}{E78AC3}  % Soft Pink
\definecolor{qcFive}{HTML}{A6D854}  % Lime Green
\definecolor{qcSix}{HTML}{FFD92F}   % Bright Yellow
\definecolor{qcSeven}{HTML}{E5C494} % Tan/Sand

\definecolor{qcInk}{HTML}{172033}
\definecolor{qcBlue}{HTML}{66A5C2}   % Soft Teal
\definecolor{qcBlueLight}{HTML}{66C2A5} 
\definecolor{qcRed}{RGB}{145,40,40}
\definecolor{qcRedLight}{RGB}{213,111,111}
\definecolor{qcGrid}{RGB}{208,213,221}

\newtheorem{lemma}{Lemma}

\title{Coloring Queens with Thousands of Encodings} %TODO Please add

\author{
Bernardo Subercaseaux \and
    Benjamin Przybocki \and Marijn J. H. Heule
}

\institute{
  Carnegie Mellon University,
  Pittsburgh, PA, USA\\
  \email{[bsuberca,bprzyboc,mheule]@andrew.cmu.edu}
 }

\authorrunning{B. Subercaseaux, B. Przybocki and M. Heule}

\titlerunning{Coloring Queens with Thousands of Encodings}

\begin{document}

\maketitle

%TODO mandatory: add short abstract of the document
\begin{abstract}
In \emph{The Art of Computer Programming}, Knuth benchmarked 10 encoding techniques for computing the chromatic number of the queen’s graph: the minimum number of colors needed to color the squares of an $n \times n$ chessboard so that no two squares sharing a row, column, or diagonal receive the same color. In this paper, we extend his analysis much further by comparing thousands of encodings for the same problem, which allows us to identify additional factors that are important for solver performance. We obtain 1584 encodings for this problem by varying (a) the constraints that encode which color is assigned to each cell, (b) the constraints that forbid the same color appearing in a row, column, or diagonal line, and (c) the symmetry-breaking constraints. We find that the three most impactful encoding factors are (i) the choice of symmetry-breaking constraints, (ii) enabling so-called clique hints, and (iii) enforcing that each cell is assigned exactly one color through blocked clauses. Furthermore, while Knuth proposed clique hints as an advantage of the order encoding, we show in fact that they can be effectively employed for the one-hot encoding as well. 
\end{abstract}

% \pagestyle{empty}

% Rainbow colors using pastel shades
\definecolor{qcRed}{HTML}{FFADAD}
\definecolor{qcOrange}{HTML}{FFD6A5}
\definecolor{qcYellow}{HTML}{FDFFB6}
\definecolor{qcGreen}{HTML}{CAFFBF}
\definecolor{qcBlue}{HTML}{9BF6FF}
\definecolor{qcIndigo}{HTML}{A0C4FF}
\definecolor{qcViolet}{HTML}{BDB2FF}

% Helper macro to map modulo results (0-6) to our 7 colors
\newcommand{\getcellcolor}[1]{%
    \ifcase#1 qcSeven\or qcOne\or qcTwo\or qcThree\or qcFour\or qcFive\or qcSix\fi
}

\newcommand{\ModularQueensBoard}{%
    \begin{tikzpicture}[x=0.7cm, y=0.7cm] % Adjust the scale of the board here
        \foreach \i in {1,...,7} {     % Rows (1 to 7)
            \foreach \j in {1,...,7} { % Columns (1 to 7)
                
                % 1. Evaluate the formula: c = (j - 2i) mod 7
                % Note: We add 14 before the modulo to ensure the PGF math engine doesn't return negative remainders.
                \pgfmathsetmacro{\c}{int(mod(\j - 2*\i + 14, 7))}
                
                % 2. Map the result to a color
                \edef\cellcolor{\getcellcolor{\c}}
                
                % 3. Draw and color the square
                \fill[\cellcolor] (\j-1, \i-1) rectangle ++(1, 1);
                \draw[black!65, line width=0.35pt] (\j-1, \i-1) rectangle ++(1, 1);
                
                % 4. Place the queen in the center of the square
                \node at (\j-0.5, \i-0.5) {\fontsize{16pt}{16pt}\selectfont\symqueen};
            }
        }
    \end{tikzpicture}%
}

% \begin{figure}[htpb]
%     \centering
%     \ModularQueensBoard
%     \caption{Illustration of a $7$-coloring for the queen's graph of order $7$}
%     \label{fig:queens-7-modular}
% \end{figure}
\begin{wrapfigure}{r}{0.33\textwidth}
    \centering
    \scalebox{0.8}{
    \ModularQueensBoard
    }
    \caption{A $7$-coloring for the queen's graph of order $7$}
    \label{fig:queen7}
\end{wrapfigure}

\section{Introduction}

There are often many ways to encode a problem as a CNF formula, and how one does this can have a dramatic impact on solver performance. Indeed, improved encodings have been key to many recent applications of SAT to hard mathematical problems~\cite{subercaseauxPackingChromaticNumber2023,happyEnding,unfolding}. Nevertheless, the question of what makes for a good SAT encoding is still poorly understood. In this paper, we shed some light on this problem with a systematic empirical approach: we compare the performance of thousands of encodings for a concrete problem. Specifically, we focus on the problem of determining the chromatic number of the $n \times n$ queen's graph; that is, how many colors do we need to color the squares of an $n \times n$ chessboard so that no two squares sharing a row, column, or diagonal receive the same color (see \Cref{fig:queen7})? More specifically, we ask whether the chromatic number of the $n \times n$ queen's graph is $n$ for $n \in \{8,9,10\}$, which is the smallest it could be \emph{a priori} (the answer in each case is no).
%
% \input{figs/queens_graph7}
% \begin{wrapfigure}{r}{0.33\textwidth}
%     \centering
%     \ModularQueensBoard
%     \caption{A $7$-coloring for the queen's graph of order $7$}
%     \label{fig:queen7}
% \end{wrapfigure}
%

We focus on this problem for a few reasons. First, in Volume 4B of \emph{The Art of Computer Programming}~\cite{taocp-4b}, Knuth used this problem to benchmark various encodings of graph coloring problems, although he only evaluated 10 different encodings, so we thought a more thorough evaluation was warranted. Second, for suitable choices of parameters, the problem lies in a sweet spot of difficulty, taking seconds or minutes to solve with a proper encoding. And third, many constraint satisfaction problems can be naturally cast as graph coloring problems, and the encoding variations we discuss will be applicable to many other such problems.

Most CNF formulas, arising not only from mathematical problems but also from industrial applications, are composed of multiple constraints that may interact with one another. Furthermore, additional constraints can be added on top of base encodings for efficiency (e.g., symmetry breaking). By considering a few options for each of the building blocks of an encoding, one can generate thousands of encodings for a given problem.

For our problem, there are three broad categories of constraints we consider. First, there are the within-cell constraints, which enforce that each cell is assigned at least one color and (optionally) that each cell is assigned at most one color. Second, there are the between-cell constraints, which forbid the same color from appearing in a row, column, or diagonal line. And third, there are the symmetry-breaking constraints, which exploit the symmetries in the problem to make assumptions about how certain cells are colored. Each of the three categories of constraints gives rise to several encoding choices, and we will show that some of these choices have a substantial impact on performance.

Altogether, we consider 1584 encodings for this problem. Our experimental results show that there are a few encoding choices with a large impact on performance. Three factors stand out as especially important. First, a large performance improvement comes from including \emph{clique hints}, which are clauses asserting that every color appears at least once in every clique of size $n$. Clique hints are implied by the graph coloring constraints, but deducing them requires pigeonhole-principle reasoning, which SAT solvers struggle with. The second factor is how one implements symmetry breaking (i.e., for which cells we make assumptions about their color). It is not surprising that symmetry breaking improves performance, but it is interesting how much the symmetry-breaking method affects performance. And third, enforcing that each cell is assigned exactly one color (as opposed to merely at least one color) tends to be helpful. Thus, the practical upshot of our results is that for relevantly similar problems, clique hints should be enabled, one should enforce that each cell is assigned exactly one color, and one should experiment with several different symmetry-breaking constraints to see which one is most performant. In \Cref{sec-results}, we analyze several other encoding choices, some of which have smaller but still measurable impacts on performance and some of which have at most negligible impacts.

We reach a different conclusion than Knuth did in his experiments. Among the encodings that Knuth experimented with, he found an encoding based on the order encoding to perform best, while our results show that encodings based on the one-hot encoding perform best. This is because Knuth only used clique hints in combination with the order encoding following a proposal of Tajima~\cite{tajima}, although they can also be used with the one-hot encoding, yielding a large performance improvement. Another difference is that Knuth did not experiment with different symmetry-breaking constraints, something that our results show to be quite important. In fact, the symmetry breaking Knuth implemented (row-1 symmetry breaking) actually performs worse than most other symmetry-breaking methods we test. Thus, by comparing a wider gamut of encodings, we have refined our understanding of which encoding methods are best and which factors are most important for this problem.

\paragraph{Related work.} In 1848, the chess composer Max Bezzel posed the question of whether it is possible to arrange 8 queens on a standard $8 \times 8$ chessboard such that none can attack another~\cite{Bezzel1848}; the problem was solved affirmatively by Nauck two years later~\cite{Nauck1850}. Graph theoretically, this is equivalent to asking whether the $8 \times 8$ queen's graph has an independent set of size 8. This famous problem has inspired many variations and generalizations~\cite{queens-survey}. The chromatic number of the queen's graph is a natural variant of this famous problem, since each color class forms an independent set. In 1910, Bennett~\cite{Bennett1910} proved that the chromatic number of the $8 \times 8$ queen's graph is strictly greater than 8 (see \cite{Gosset1914} for a simpler proof). With the help of computers, we now know that the chromatic number of the $n \times n$ queen's graph is $n+1$ for $n \in \{8,9,10\}$ and $n$ for $n \in [11,26]$~\cite{vasquez-vimont}. 
Showing that the chromatic number for $n=10$ is $11$ turned out to be hard for many approaches~\cite{petra-1, petra-2, clicolcom, color-column, colorBDD, hebrard2020constraint}, whereas our best encoding can do it in roughly four seconds.  
Related problems regarding the queen's graph (such as its domination number or domatic number) have been attacked via SAT~\cite{rostami2025queendominationsatsolving,domatic}. More generally, SAT is an effective tool for graph coloring problems, and several papers have therefore studied how to do this efficiently~\cite{petra-1,petra-2}.

\section{Encodings}

\subsection{Within-Cell Constraints}

\subsubsection{One-Hot Encoding} \label{sec-one-hot}

For the one-hot encoding, we introduce a variable $c_{(i,j),k}$ for every cell $(i,j)$ and color $k$. The interpretation of $c_{(i,j),k}$ is that the color $k$ is assigned to cell $(i,j)$. According to the problem description, each cell should be assigned \emph{exactly} one color. However, for the encoding to be correct, it is only necessary to enforce that each cell is assigned \emph{at least} one color, since any solution in which some cells are multiply colored can be transformed into one in which every cell is singly colored. If we only enforce that each cell is assigned at least one color, we simply include the clause $\bigvee_k c_{(i,j),k}$ for each cell $(i,j)$.

On the other hand, if we want to enforce that each cell is assigned exactly one color, there are many ways to encode this. The \emph{pairwise} encoding includes both the wide clauses of the form $\bigvee_k c_{(i,j),k}$ as well as the clauses $\overline{c_{(i,j),k}} \lor \overline{c_{(i,j),k'}}$ for all cells $(i,j)$ and distinct colors $k$ and $k'$. Note that the binary clauses are \emph{blocked} with respect to the previous clauses of the form $\bigvee_k c_{(i,j),k}$~\cite{Kullman99}.
Alternatively, we can introduce auxiliary variables to encode the same constraint using fewer clauses. The encodings we consider are all slight variations of the sequential counter encoding~\cite{sinz}. Given $\ell \ge 3$, we can recursively encode the at-most-one constraint as follows:
\begin{equation} \label{eq-rec}
    \amo(x_1,\dots,x_n) := \amo(x_1,\dots,x_{\ell-1},\overline{y}) \land \amo(y,x_{\ell},\dots,x_n),
\end{equation}
where we use the pairwise encoding as a base case when $n \le \ell$. We call $\ell$ the \emph{cut} parameter, and we will only consider $\ell \in \{3,4,5\}$. Together with the wide clauses of the form $\bigvee_k c_{(i,j),k}$, this yields an encoding for the exactly one constraint that we can apply to $\{c_{(i,j),k} \mid k \in [n]\}$. We can also omit the wide clauses and instead recursively encode the exactly one constraint with the same recursion as in \eqref{eq-rec}, with the only difference being that in the base case ($n \le \ell$), we encode $\eo(x_1,\dots,x_n)$ using the pairwise encoding together with the clause $x_1 \lor \dots \lor x_n$. This way, all clauses have width at most $\ell$, and thus we call this the \emph{narrow} variant. We can obtain yet another variation by substituting $\amo(x_{\ell},\dots,x_n,y)$ for $\amo(y,x_{\ell},\dots,x_n)$ in \eqref{eq-rec} (and similarly for $\eo$); following the terminology of~\cite{nawrockiXORLocalSearch2021}, we call the original variant with $y$ as the first argument the \emph{linear} encoding, and we call the variant with $y$ as the last argument the \emph{pooled} encoding. Finally, when we apply one of the above at-most-one or exactly one encodings to $\{c_{(i,j),k} \mid k \in [n]\}$, we can either apply it to $c_{(i,j),1}, \dots, c_{(i,j),n}$ in order or we can randomly shuffle the order of these variables beforehand (independently for each cell).

In summary, we have three strategies for encoding the within-cell constraints with the one-hot encoding: the at-least-one constraint, the exactly one constraint using the pairwise encoding, and the exactly one constraint using one of the recursive encodings. For the recursive encodings, we vary four parameters: the choice of $\ell \in \{3,4,5\}$, the choice of using wide or narrow clauses for at-least-one, the choice of linear versus pooled, and the choice of sorted versus randomized variable ordering. In total, this yields 26 one-hot encodings for the within-cell constraints.

\subsubsection{Order Encoding}

Instead of introducing the variables $c_{(i,j),k}$, the \emph{order} encoding~\cite{crawford-baker,order,ulc} uses variables of the form $o_{(i,j),k}$ for each cell $(i,j)$ and $k \in [n-1]$. The standard interpretation of $o_{(i,j),k}$ is that the cell $(i,j)$ is assigned one of the colors from $\{1,\dots,k\}$. To enforce this semantics, we include the clauses $\overline{o_{(i,j),k-1}} \lor o_{(i,j),k}$. Then, we can assert that cell $(i,j)$ has color $k \in [2,n-1]$ with the conjunction $\overline{o_{(i,j),k-1}} \land o_{(i,j),k}$; we use $o_{(i,j),1}$ to assert that $(i,j)$ has color 1 and $\overline{o_{(i,j),n-1}}$ to assert that $(i,j)$ has color $n$.
Note that the order encoding variables implicitly guarantee that at least one color is chosen: each cell $(i,j)$ is assigned the color $k$, where $k$ is the smallest integer such that $o_{(i,j),k}$ is true, or $k=n$ if there is no such $k$.
In the rest of this paper, we will sometimes use the literals $c_{(i,j),k}$ and $\overline{c_{(i,j),k}}$ with the understanding that if we are using the order encoding, these must be replaced by their definitions in terms of the order variables.

When using the order encoding for the within-cell constraints, there are two parameters we vary, yielding four order encodings for the within-cell constraints. First, we can drop the clauses $\overline{o_{(i,j),k-1}} \lor o_{(i,j),k}$, which are in fact unnecessary. This allows there to be multiple colors $k$ for which $\overline{o_{(i,j),k-1}} \land o_{(i,j),k}$. This is analogous to not including the at-most-one constraint on colors for the one-hot encoding. Second, instead of using the standard order on colors $1,\dots,n$, we can instead implement the order encoding on a random ordering of the colors (independently for each cell). That is, if $\sigma \colon [n] \to [n]$ is a random permutation, then the interpretation of $o_{(i,j),k}$ is now that $(i,j)$ is assigned one of the colors from $\{\sigma(1), \dots, \sigma(k)\}$. This interpretation can be encoded with only minor modifications to the above description.

\subsection{Between-Cell Constraints}

\subsubsection{Independence Constraint}

The independence constraint asserts that cells lying in the same row, column, or diagonal line have distinct colors. The simplest way to encode this is the \emph{pairwise} encoding, which includes a clause of the form $\overline{c_{(i,j),k}} \lor \overline{c_{(i',j'),k}}$ for each color $k$ and each pair of distinct cells $\{(i,j),(i',j')\}$ lying in the same row, column, or diagonal line. But we can also use one of the recursive at-most-one encodings described in \Cref{sec-one-hot} to enforce that each color occurs at most once along each row, column, and diagonal line. To avoid a combinatorial explosion in the number of encodings we consider, if we use a recursive at-most-one encoding for the within-cell constraints and the independence constraints, we use the same cut $\ell \in \{3,4,5\}$, the same choice of linear versus pooled, and the same choice of sorted versus randomized variable ordering.

\subsubsection{Clique Hints}

For each clique of size $n$ in the queen's graph (i.e., each row, column, the main diagonal, and the antidiagonal), we know that each color must appear exactly once in this clique. Thus, we can optionally add \emph{clique hints} to our encoding, which assert that every color appears at least once in every clique of size $n$. If we use one of the recursive at-most-one encodings for the independence constraints, we have two ways to encode a clique hint: we can use either wide or narrow clauses for the at-least-one constraint, as described in \Cref{sec-one-hot}. For simplicity, when both clique hints and within-cell constraints are encoded recursively, we take the same setting (wide or narrow) for both.

Knuth attributes the idea of applying clique hints to this problem to Tajima~\cite{tajima}, although their implementation of clique hints differs from ours and only applies to the order encoding. Specifically, for every row, column, and diagonal line of size $k$ (not merely the main diagonal and antidiagonal), they assert that some cell has a color $\le n-k+1$ and some cell has a color $\ge k$.\footnote{Knuth also considers a \emph{double clique hint}, which imposes these order-based clique hints for two different permutations of the colors.} In particular, for each clique of size $n$, their constraint asserts that the colors $1$ and $n$ both appear, which is much weaker than our version of the clique hint. Their version has the advantage that it can be applied to all diagonal lines, and when using the order encoding, it doesn't require introducing auxiliary variables to play the role of $c_{(i,j),k}$. However, we found that our implementation of the clique hints considerably outperforms theirs, so we did not consider their version further.

We remark that clique hints can be generalized to cases in which the number of colors exceeds the clique size:

\begin{lemma}\label{lemma:clique-hints}
     Let $G$ be any graph, and $K \subseteq V(G)$ be a clique in $G$. Then, for any proper coloring $f\colon V(G) \to \{1, \dots, t\}$,  and every set $X \subseteq \{1, \dots, t\}$, we have
    \[
       \sum_{v \in K} \sum_{c \in X} \mathbbm{1}_{\{f(v) = c \}} \geq |K| + |X| - t.
    \]
\end{lemma}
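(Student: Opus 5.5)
The plan is a direct counting argument. The left-hand side is the number of vertices of $K$ whose color lies in $X$. So it suffices to show that at most $t - |X|$ vertices of $K$ receive a color outside $X$.

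First, since $K$ is a clique and $f$ is proper, $f$ restricted to $K$ is injective. Adjacent vertices get distinct colors, and every two vertices of $K$ are adjacent.

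Second, partition $K$ into $K_X = \{v \in K : f(v) \in X\}$ and $K_{\bar X} = \{v \in K : f(v) \notin X\}$. Because $f(v)$ is a single color, the double sum $\sum_{v \in K}\sum_{c \in X}\mathbbm{1}_{\{f(v)=c\}}$ counts exactly $|K_X|$: each $v$ contributes $1$ if $f(v)\in X$ and $0$ otherwise.

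Third, injectivity of $f$ on $K$ gives an injection $K_{\bar X} \to \{1,\dots,t\}\setminus X$. Hence $|K_{\bar X}| \le t - |X|$, and therefore
\[
|K_X| = |K| - |K_{\bar X}| \ge |K| - (t - |X|) = |K| + |X| - t .
\]
If the right-hand side is nonpositive, the bound is trivial anyway, so no case split is needed.

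There is no real obstacle. The only points to state carefully are that $f$ is a function, so each vertex contributes at most one to the inner sum, and that properness is used only to get injectivity on the clique. As a sanity check, taking $|K| = t = n$ and $X = \{c\}$ gives a count of at least $1$, which recovers the clique hint that every color appears in every $n$-clique.
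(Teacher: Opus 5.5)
Your proof is correct and is essentially the paper's argument. The paper argues by contradiction: if there were at most $|K|+|X|-t-1$ good vertices, the at least $t-|X|+1$ non-good vertices would form a clique colored with only $t-|X|$ colors. You run the same pigeonhole bound directly, using injectivity of $f$ on $K$.
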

\begin{proof}
Let us say a vertex $v \in K$ is ``good'' if $f(v) \in X$. Assume, for the sake of a contradiction, that the number of good vertices is at most $|K| + |X| - t - 1$. Then, there are at least $|K| - (|K| + |X| - t - 1) = t - |X| + 1$ non-good vertices in $K$. Since these non-good vertices induce a clique, and $f$ assigns a color in $\{1, \dots, t\} \setminus X$ to each, we have a $(t- |X|)$-coloring of a $(t-|X|+1)$-clique, which is a contradiction.
\end{proof}
   
The clique hints described above correspond to the case $|K| = t$ and $|X| = 1$ of~\Cref{lemma:clique-hints}. We experimented as well with $|K| = t-1$, by considering for $n = 9$ the four off-by-one diagonals of size $8$, and adding for each pair of colors $(i, j)$ with $1 \leq i < j \leq n$ a wide clause stating that either some vertex of that diagonal gets color $i$ or some vertex gets color $j$, but it turned out to be detrimental in practice. We thus limit our analysis of clique hints to the case $|X| = 1$.

\subsection{Symmetry Breaking}

Symmetry-breaking constraints reduce the search space by making assumptions about how certain cells are colored. There are two kinds of symmetries to break: the $n!$ symmetries obtained by permuting the colors ($S_n$) and the 8 symmetries obtained by rotating and reflecting the board ($D_4$). Breaking all $8 \cdot n!$ symmetries effectively may be difficult, so we will be content with trying to break most of them.

A simple symmetry-breaking constraint is to assume that the first row is colored $1, 2, \dots, n$, which is valid since the colors in any solution could be relabeled to satisfy this constraint. This breaks the color symmetries but not the board symmetries. We can also impose the same constraint for any of the first $\lceil n/2 \rceil$ rows or for the main diagonal. But there are also more subtle ways to break symmetries. We consider two additional strategies for odd $n$ (e.g., $n=9$), where we first fix 5 colors in the center of the board in the shape of either a plus or a cross (see \Cref{fig:plus-cross}). The idea of using a plus shape for symmetry breaking is from Heule, Karahalios, and van Hoeve~\cite{clicolcom}, who observed that this is very effective. Next, we describe how to impose some additional constraints on the remaining 4 cells in the $3 \times 3$ grid containing the plus or cross. The goal is to add clauses so that each valid coloring of the $3 \times 3$ grid is represented by exactly one satisfying assignment up to the symmetries (both the $S_n$ and $D_4$ symmetries). To do this, we add clauses that reduce the number of satisfying assignments while preserving the invariant that every valid coloring is represented by some satisfying assignment. We consider clauses of increasing width (starting with unit clauses), adding them to the encoding if and only if they meet the previous condition, and we stop once every valid coloring is represented by exactly one satisfying assignment. More precisely, given a CNF formula $\varphi$ and $X \subseteq \mathrm{Vars}(\varphi)$, we assume that we have a function $\mathsf{enum}(\varphi, X)$ that returns the set of partial models of $\varphi$ over the restricted domain $X$; that is, $\mathsf{enum}(\varphi, X)$ consists of the assignments to $X$ that satisfy the clauses of $\varphi$ only involving variables in $X$. Given a set of partial models $M$, we also assume a function $\mathsf{canon}(M)$, which puts every model into a `canonical' form, so that two models are isomorphic if and only if their canonical forms are equal. Then, the concrete algorithm we use is presented in~\Cref{alg:sym-break}. We instantiate it with $X$ being all the variables $c_{(i, j), k}$ for which $(i, j)$ belongs to the central $3 \times 3$ grid.

\begin{figure}[ht]
    \centering
    \begin{subfigure}{0.45\textwidth}
        \centering
        \begin{tikzpicture}[x=1cm, y=1cm] % Adjust the overall scale here
        
        % ==========================================
        % ROW 1 (Top)
        % ==========================================
        
        % Cell 1
        \draw[black!65, line width=0.35pt] (0, 2) rectangle ++(1, 1);
        \node at (0.5, 2.5) {\fontsize{16pt}{16pt}\selectfont ?};
        
        % Cell 2
        \fill[qcOne] (1, 2) rectangle ++(1, 1);
        \draw[black!65, line width=0.35pt] (1, 2) rectangle ++(1, 1);
        \node at (1.5, 2.5) {\fontsize{16pt}{16pt}\selectfont 1};
        
        % Cell 3
        \draw[black!65, line width=0.35pt] (2, 2) rectangle ++(1, 1);
        \node at (2.5, 2.5) {\fontsize{16pt}{16pt}\selectfont ?};

        % ==========================================
        % ROW 2 (Middle)
        % ==========================================
        
        % Cell 4
        \fill[qcTwo] (0, 1) rectangle ++(1, 1);
        \draw[black!65, line width=0.35pt] (0, 1) rectangle ++(1, 1);
        \node at (0.5, 1.5) {\fontsize{16pt}{16pt}\selectfont 2};
        
        % Cell 5
        \fill[qcThree] (1, 1) rectangle ++(1, 1);
        \draw[black!65, line width=0.35pt] (1, 1) rectangle ++(1, 1);
        \node at (1.5, 1.5) {\fontsize{16pt}{16pt}\selectfont 3};
        
        % Cell 6
        \fill[qcFour] (2, 1) rectangle ++(1, 1);
        \draw[black!65, line width=0.35pt] (2, 1) rectangle ++(1, 1);
        \node at (2.5, 1.5) {\fontsize{16pt}{16pt}\selectfont 4};

        % ==========================================
        % ROW 3 (Bottom)
        % ==========================================
        
        % Cell 7
        \draw[black!65, line width=0.35pt] (0, 0) rectangle ++(1, 1);
        \node at (0.5, 0.5) {\fontsize{16pt}{16pt}\selectfont ?};
        
        % Cell 8
        \fill[qcFive] (1, 0) rectangle ++(1, 1);
        \draw[black!65, line width=0.35pt] (1, 0) rectangle ++(1, 1);
        \node at (1.5, 0.5) {\fontsize{16pt}{16pt}\selectfont 5};
        
        % Cell 9
        \draw[black!65, line width=0.35pt] (2, 0) rectangle ++(1, 1);
        \node at (2.5, 0.5) {\fontsize{16pt}{16pt}\selectfont ?};

        \end{tikzpicture}
        \caption{Plus}
        \label{fig:plus}
    \end{subfigure}
    \hfill
    \begin{subfigure}{0.45\textwidth}
        \centering
        \begin{tikzpicture}[x=1cm, y=1cm] % Adjust the overall scale here
        
        % ==========================================
        % ROW 1 (Top)
        % ==========================================
        
        % Cell 1
        \fill[qcOne] (0, 2) rectangle ++(1, 1);
        \draw[black!65, line width=0.35pt] (0, 2) rectangle ++(1, 1);
        \node at (0.5, 2.5) {\fontsize{16pt}{16pt}\selectfont 1};
        
        % Cell 2
        \draw[black!65, line width=0.35pt] (1, 2) rectangle ++(1, 1);
        \node at (1.5, 2.5) {\fontsize{16pt}{16pt}\selectfont ?};
        
        % Cell 3
        \fill[qcTwo] (2, 2) rectangle ++(1, 1);
        \draw[black!65, line width=0.35pt] (2, 2) rectangle ++(1, 1);
        \node at (2.5, 2.5) {\fontsize{16pt}{16pt}\selectfont 2};

        % ==========================================
        % ROW 2 (Middle)
        % ==========================================
        
        % Cell 4
        \draw[black!65, line width=0.35pt] (0, 1) rectangle ++(1, 1);
        \node at (0.5, 1.5) {\fontsize{16pt}{16pt}\selectfont ?};
        
        % Cell 5
        \fill[qcThree] (1, 1) rectangle ++(1, 1);
        \draw[black!65, line width=0.35pt] (1, 1) rectangle ++(1, 1);
        \node at (1.5, 1.5) {\fontsize{16pt}{16pt}\selectfont 3};
        
        % Cell 6
        \draw[black!65, line width=0.35pt] (2, 1) rectangle ++(1, 1);
        \node at (2.5, 1.5) {\fontsize{16pt}{16pt}\selectfont ?};

        % ==========================================
        % ROW 3 (Bottom)
        % ==========================================
        
        % Cell 7
        \fill[qcFour] (0, 0) rectangle ++(1, 1);
        \draw[black!65, line width=0.35pt] (0, 0) rectangle ++(1, 1);
        \node at (0.5, 0.5) {\fontsize{16pt}{16pt}\selectfont 4};
        
        % Cell 8
        \draw[black!65, line width=0.35pt] (1, 0) rectangle ++(1, 1);
        \node at (1.5, 0.5) {\fontsize{16pt}{16pt}\selectfont ?};
        
        % Cell 9
        \fill[qcFive] (2, 0) rectangle ++(1, 1);
        \draw[black!65, line width=0.35pt] (2, 0) rectangle ++(1, 1);
        \node at (2.5, 0.5) {\fontsize{16pt}{16pt}\selectfont 5};

        \end{tikzpicture}
        \caption{Cross}
        \label{fig:cross}
    \end{subfigure}
    \caption{Plus and cross symmetry breaking}
    \label{fig:plus-cross}
\end{figure}

\begin{algorithm}
  \caption{Symmetry Breaking Generation}
  \begin{algorithmic}[1]
    \Require $\varphi$ is a CNF formula and $X \subseteq \mathrm{Vars}(\varphi)$ is a restricted domain.
    \Procedure{Symmetry Breaking Generation}{$\varphi, X$}
    \State $M \gets \mathsf{enum}(\varphi, X)$
    \State $M_{\textsf{can}} \gets \mathsf{canon}(M)$
    \State $\ell \gets 1$
    \While{$|M_{\textsf{can}}| < |M|$}
    \For{each clause $C$ of length $\ell$ with $\mathrm{Vars}(C) \subseteq X$}
        \State $M' \gets\mathsf{enum}(\varphi \land C, X)$
        \State $M'_{\textsf{can}}  \gets\mathsf{canon}(M')$
        \If{$M_{\textsf{can}} = M'_{\textsf{can}}$ and $|M'| < |M|$}
            \State $\varphi \gets \varphi \land C$
            \State $M \gets M'$
            \State $M_{\textsf{can}}\gets M'_{\textsf{can}}$
        \EndIf
    \EndFor
    \State $\ell \gets \ell + 1$
    \EndWhile
    \State \Return $\varphi$
    \EndProcedure
  \end{algorithmic}\label{alg:sym-break}
\end{algorithm}

For instance, the concrete clauses we obtain for $n=9$ in the plus configuration are
\begin{align*}
    &\bigwedge_{k \in \{4,6,7,8\}} \overline{c_{(4,4), k}} \land \bigwedge_{k \in \{5,6,7,9\}} \overline{c_{(4,6), k}} \land \bigwedge_{k \in \{1,6,8,9\}} \overline{c_{(6,4), k}} \land \bigwedge_{k \in \{7,8,9\}} \overline{c_{(6,6), k}}\\  &{\land} \, (c_{(4, 4), 5} \lor \overline{c_{(4, 6), 2}}) \land (c_{(4, 4), 5} \lor \overline{c_{(6, 6), 1}}) \land (\overline{c_{(4, 6), 2}} \lor c_{(6, 4), 4})  \land (\overline{c_{(4, 6), 2}} \lor c_{(6, 6), 1}) \\
    &{\land} \, (c_{(4,4), 5} \lor c_{(6, 4), 4} \lor \overline{c_{(6,6), 2}}) \land (c_{(4,4), 5} \lor \overline{c_{(6, 4), 4}} \lor c_{(6,6), 2}).
\end{align*}
  
We verified the correctness of all the symmetry-breaking clauses described above using the SR proof format~\cite{CodelAH24}.

% ['x_(0, 0)_4', 'x_(2, 0)_3', '-x_(2, 2)_1']
% ['x_(0, 0)_4', '-x_(2, 0)_3', 'x_(2, 2)_1']

\section{Results} \label{sec-results}

We focus our main analysis on the $n=9$ and $n=10$ cases, since the $n=8$ case seems to be too easy to thoroughly evaluate differences in encodings (with a proper encoding, the problem can be solved in under 20 milliseconds). We always enable clique hints and some form of symmetry breaking, since this is necessary to run all encoding options to completion with a small timeout (10 minutes). This results in considering 1584 encodings for $n=9$ and $1188$ for $n=10$; the difference is that plus and cross symmetry breaking are only applicable for odd $n$.

To show the effect of disabling clique hints or symmetry breaking on $n=8$, we ran the worst combination of encoding parameters (based on the data using clique hints and symmetry breaking) without clique hints or symmetry breaking, and it took $76.82$ core hours, computed with a cube-and-conquer split (16299 cubes) using \texttt{march\_cu}~\cite{HeuleKullmannWieringaBiere11} on the Bridges-2 cluster~\cite{cluster}. In general, as $n$ increases, the impact of symmetry breaking trumps that of clique hints (which can be explained by the $n!$ theoretical speed-up based on symmetries); for $n = 9$, instances using symmetry breaking (plus) but no clique hints can be solved in a matter of minutes (0.21 core hours in the worst case), whereas instances with clique hints but no symmetry breaking took multiple hours (11.02 core hours in the worst case).

We ran all of our experiments on 8 threads on a 2024 MacBook Pro with an M4 Pro CPU and 24 GB of RAM. We used the solver \textsf{Kissat}~\cite{BiereFallerFazekasFleuryFroleyksPollitt-SAT-Competition-2024-solvers} (v4.0.4 \texttt{8af8e56}), running each instance on a single thread (i.e., no cube-and-conquer) with the \texttt{--no-factor} flag to avoid Bounded Variable Addition reencodings.  Our code and data are publicly available at \url{https://github.com/bsubercaseaux/QueenColoringEncodings}. Furthermore, we have designed an interactive website where the runtimes for different encoding settings can be compared visually; we encourage the reader to visit it at \url{https://queen-encodings.surge.sh}. 

\subsection{Regression Models}\label{subsec:regression}

For $n=9$, our runtimes ranged from 0.048 to 3.664 seconds, and for $n=10$, our runtimes ranged from 4.467 to 538.101 seconds. To get an understanding of this variance and how each factor contributes to the runtime, we ran a linear regression on the logarithm of the runtimes. The results are presented in \Cref{tab:regression-9,tab:regression-10}. The options whose coefficients are listed as an em dash (---) are treated as the default options for the sake of the model, and each remaining option is modeled using a categorical predictor variable. The model predicts the natural logarithm of the runtime as a linear function of these categorical predictors. Since our goal is to compare the performance of encoding options, we omit the constant coefficient in the tables; in other words, we only predict the difference in runtime relative to the default options. For example, the coefficient for row-2 symmetry breaking is $-0.2680$, which means that the predicted effect of switching from row-1 symmetry breaking (the baseline) to row 2 is to multiply the runtime by $e^{-0.2680} \approx 0.76491$.
Note that the $R^2$ for both models is very high, so a linear model explains most of the variance.

From the regression models, we can see which factors are most important: the choice of symmetry breaking and the decision to use one of the exactly-one encodings for the within-cell constraints. On the other hand, the details of how the recursive encodings are implemented do not have a large impact (for both within- and between-cell constraints). Randomizing the variable ordering yields a small but significant improvement for the within-cell constraints, while having a negligible effect for the between-cell constraints. We also see that the order encoding is not as performant as the one-hot encoding when we use the best settings for each. Finally, it is interesting to note the differences between \Cref{tab:regression-9,tab:regression-10}. For example, diagonal symmetry breaking is good relative to row-1 symmetry breaking for $n=9$ but not for $n=10$. This shows that what makes for a good encoding is highly sensitive to the choice of problem, so one should be careful extrapolating our results to other problems.

\begin{table}[p]
    \centering
    \caption{Linear regression  for the natural logarithm of the runtimes for $n = 9$ ($R^2 = 0.937$)}
\begin{tabular}{lcc}
\toprule
 & coefficient & confidence interval (95\%) \\
\midrule
\textbf{Symmetry breaking} & & \\
Row symmetry breaking & & \\
\quad Row 1 \textbf{(baseline)} & --- & --- \\
\quad Row 2 & $-0.2680$ & $[-0.307, -0.229]$ \\
\quad Row 3 & $-0.2515$ & $[-0.290, -0.213]$ \\
\quad Row 4 & $-0.3009$ & $[-0.340, -0.262]$ \\
\quad Row 5 & $-0.5474$ & $[-0.586, -0.509]$ \\
Diagonal symmetry breaking & $-0.3969$ & $[-0.436, -0.358]$ \\
Plus symmetry breaking & $-1.5139$ & $[-1.553, -1.475]$ \\
Cross symmetry breaking & $-1.0350$ & $[-1.074, -0.996]$ \\
\midrule
\textbf{Within-cell constraints} & & \\
One-hot & & \\
\quad At least one \textbf{(baseline)} & --- & --- \\
\quad Exactly one (pairwise) & $-0.7280$ & $[-0.767, -0.689]$ \\
\quad Exactly one (recursive) & $-0.5414$ & $[-0.598, -0.485]$ \\
\qquad Cut & & \\
\qquad \quad Cut 3 & --- & --- \\
\qquad \quad Cut 4 & $-0.0624$ & $[-0.112, -0.012]$ \\
\qquad \quad Cut 5 & $-0.0524$ & $[-0.102, -0.002]$ \\
\qquad Auxiliary placement & & \\
\qquad \quad Linear & --- & --- \\
\qquad \quad Pooled & $+0.0021$ & $[-0.039, +0.043]$ \\
\qquad At least one type & & \\
\qquad \quad Wide & --- & --- \\
\qquad \quad Narrow & $-0.0436$ & $[-0.084, -0.003]$ \\
Order & $+0.8547$ & $[+0.819, +0.890]$ \\
\quad At least one & --- & --- \\
\quad Exactly one & $-0.9124$ & $[-0.940, -0.885]$ \\
Literal ordering & & \\
\quad Sorted & --- & --- \\
\quad Randomized & $-0.1340$ & $[-0.157, -0.111]$ \\
\midrule
\textbf{Between-cell constraints} & & \\
Pairwise \textbf{(baseline)} & --- & --- \\
Recursive & $+0.1313$ & $[+0.091, +0.172]$ \\
\quad Cut & & \\
\qquad Cut 3 & --- & --- \\
\qquad Cut 4 & $+0.0048$ & $[-0.022, +0.032]$ \\
\qquad Cut 5 & $+0.0008$ & $[-0.026, +0.028]$ \\
\quad Auxiliary placement & & \\
\qquad Linear & --- & --- \\
\qquad Pooled & $-0.0255$ & $[-0.047, -0.004]$ \\
\quad Literal ordering & & \\
\qquad Sorted & --- & --- \\
\qquad Randomized & $+0.0231$ & $[+0.002, +0.044]$ \\
\quad Clique hint type & & \\
\qquad Wide & --- & --- \\
\qquad Narrow & $+0.0066$ & $[-0.015, +0.028]$ \\
\bottomrule
\end{tabular}
    \label{tab:regression-9}
\end{table}

\begin{table}[p]
    \centering
    \caption{Linear regression for the natural logarithm of the runtimes for $n = 10$ ($R^2 = 0.941$)}
\begin{tabular}{lcc}
\toprule
 & coefficient & confidence interval (95\%) \\
\midrule
\textbf{Symmetry breaking} & & \\
Row symmetry breaking & & \\
\quad Row 1 \textbf{(baseline)} & --- & --- \\
\quad Row 2 & $-0.0762$ & $[-0.127, -0.025]$ \\
\quad Row 3 & $-0.1042$ & $[-0.155, -0.053]$ \\
\quad Row 4 & $-0.2002$ & $[-0.251, -0.149]$ \\
\quad Row 5 & $-0.2897$ & $[-0.341, -0.239]$ \\
Diagonal symmetry breaking & $+0.0660$ & $[+0.015, +0.117]$ \\
\midrule
\textbf{Within-cell constraints} & & \\
One-hot & & \\
\quad At least one \textbf{(baseline)} & --- & --- \\
\quad Exactly one (pairwise) & $-2.3251$ & $[-2.384, -2.266]$ \\
\quad Exactly one (recursive) & $-2.0789$ & $[-2.164, -1.994]$ \\
\qquad Cut & & \\
\qquad \quad Cut 3 & --- & --- \\
\qquad \quad Cut 4 & $+0.0099$ & $[-0.066, +0.086]$ \\
\qquad \quad Cut 5 & $-0.0092$ & $[-0.085, +0.067]$ \\
\qquad Auxiliary placement & & \\
\qquad \quad Linear & --- & --- \\
\qquad \quad Pooled & $+0.0481$ & $[-0.014, +0.110]$ \\
\qquad At least one type & & \\
\qquad \quad Wide & --- & --- \\
\qquad \quad Narrow & $+0.1524$ & $[+0.090, +0.215]$ \\
Order & $+0.1412$ & $[+0.088, +0.195]$ \\
\quad At least one & --- & --- \\
\quad Exactly one & $-1.8553$ & $[-1.897, -1.814]$ \\
Literal ordering & & \\
\quad Sorted & --- & --- \\
\quad Randomized & $-0.2027$ & $[-0.237, -0.168]$ \\
\midrule
\textbf{Between-cell constraints} & & \\
Pairwise \textbf{(baseline)} & --- & --- \\
Recursive & $+0.0543$ & $[-0.007, +0.116]$ \\
\quad Cut & & \\
\qquad Cut 3 & --- & --- \\
\qquad Cut 4 & $-0.0042$ & $[-0.045, +0.036]$ \\
\qquad Cut 5 & $-0.0167$ & $[-0.057, +0.024]$ \\
\quad Auxiliary placement & & \\
\qquad Linear & --- & --- \\
\qquad Pooled & $+0.0236$ & $[-0.010, +0.057]$ \\
\quad Literal ordering & & \\
\qquad Sorted & --- & --- \\
\qquad Randomized & $+0.0099$ & $[-0.023, +0.042]$ \\
\quad Clique hint type & & \\
\qquad Wide & --- & --- \\
\qquad Narrow & $-0.0308$ & $[-0.064, +0.002]$ \\
\bottomrule
\end{tabular}
    \label{tab:regression-10}
\end{table}

\subsection{Symmetry Breaking}

Given the importance of differences in symmetry-breaking constraints, we also show in \Cref{fig:sb-plots} box-and-whisker plots of the runtimes with different choices of symmetry-breaking constraints for both $n=9$ and $n=10$. For $n=9$, we again see that the plus constraint is the best, with the cross being a close second.  
These plots also provide some nuance to the summary in \Cref{tab:regression-9,tab:regression-10}. For example, for $n=10$, \Cref{tab:regression-10} suggests that symmetry breaking on row 5 is noticeably better than on row 1, and yet \Cref{fig:sb-plots} shows that the median runtime for row-1 symmetry breaking is actually lower than row 5. Thus, for $n=10$, row-5 symmetry breaking is better on average than row 1 because the latter has a longer tail of large runtimes than the former.

\begin{figure}[ht]
    \centering
   \definecolor{symPlus}{RGB}{31,119,180}
\definecolor{symCross}{RGB}{214,39,40}
\definecolor{symDiagonal}{RGB}{44,160,44}
\definecolor{symRowZero}{RGB}{148,103,189}
\definecolor{symRowOne}{RGB}{255,127,14}
\definecolor{symRowTwo}{RGB}{23,190,207}
\definecolor{symRowThree}{RGB}{188,189,34}
\definecolor{symRowFour}{RGB}{127,127,127}

\begin{tikzpicture}
\begin{groupplot}[
  group style={group size=2 by 1, horizontal sep=1.25cm},
  width=7.2cm,
  height=5.4cm,
  ymin=0.03,
  boxplot/draw direction=y,
  xtick={1,2,3,4,5,6,7,8},
  xticklabels={plus,cross,diagonal,row 1,row 2,row 3,row 4,row 5},
  x tick label style={rotate=35, anchor=east, font=\small},
  ylabel={Runtime (s)},
  grid=major,
  ymode=log,
  grid style={draw=gray!20},
  major grid style={draw=gray!25},
  tick align=outside,
  boxplot/every box/.style={solid, line width=0.55pt},
  boxplot/every whisker/.style={solid, line width=0.55pt},
  boxplot/every median/.style={solid, draw=black, line width=0.7pt},
]
\nextgroupplot[title={$n=9$}, ylabel={Runtime (s)}]
\addplot [draw=symPlus, fill=symPlus!28, solid, line width=0.55pt,
  boxplot prepared={
    draw position=1,
    lower whisker=0.048716375,
    lower quartile=0.1091858,
    median=0.15391223,
    upper quartile=0.29849773,
    upper whisker=0.64949783
  }
] coordinates {};
\addplot [draw=symCross, fill=symCross!28, solid, line width=0.55pt,
  boxplot prepared={
    draw position=2,
    lower whisker=0.06029675,
    lower quartile=0.13608448,
    median=0.27051102,
    upper quartile=0.44955146,
    upper whisker=1.3052195
  }
] coordinates {};
\addplot [draw=symDiagonal, fill=symDiagonal!28, solid, line width=0.55pt,
  boxplot prepared={
    draw position=3,
    lower whisker=0.20612479,
    lower quartile=0.3033123,
    median=0.47565096,
    upper quartile=0.89215562,
    upper whisker=1.80101
  }
] coordinates {};
\addplot [draw=symRowZero, fill=symRowZero!28, solid, line width=0.55pt,
  boxplot prepared={
    draw position=4,
    lower whisker=0.26522046,
    lower quartile=0.41969361,
    median=0.65763031,
    upper quartile=1.1036377,
    upper whisker=3.6642759
  }
] coordinates {};
\addplot [draw=symRowOne, fill=symRowOne!28, solid, line width=0.55pt,
  boxplot prepared={
    draw position=5,
    lower whisker=0.228178,
    lower quartile=0.34966996,
    median=0.54326581,
    upper quartile=0.92873857,
    upper whisker=3.5944511
  }
] coordinates {};
\addplot [draw=symRowTwo, fill=symRowTwo!28, solid, line width=0.55pt,
  boxplot prepared={
    draw position=6,
    lower whisker=0.22697683,
    lower quartile=0.34014052,
    median=0.5473265,
    upper quartile=0.92304729,
    upper whisker=2.8628732
  }
] coordinates {};
\addplot [draw=symRowThree, fill=symRowThree!28, solid, line width=0.55pt,
  boxplot prepared={
    draw position=7,
    lower whisker=0.2072425,
    lower quartile=0.31568226,
    median=0.55013654,
    upper quartile=1.0182554,
    upper whisker=1.9460709
  }
] coordinates {};
\addplot [draw=symRowFour, fill=symRowFour!28, solid, line width=0.55pt,
  boxplot prepared={
    draw position=8,
    lower whisker=0.10558929,
    lower quartile=0.26398758,
    median=0.37535977,
    upper quartile=0.67496302,
    upper whisker=1.6456619
  }
] coordinates {};

\nextgroupplot[title={$n=10$}, ylabel={}, ymin=2]
% \addplot [draw=symPlus, fill=symPlus!28, solid, line width=0.55pt,
%   boxplot prepared={
%     draw position=1,
%     lower whisker=2.2747706,
%     lower quartile=3.6128805,
%     median=5.3112873,
%     upper quartile=38.620109,
%     upper whisker=122.45294
%   }
% ] coordinates {};
% \addplot [draw=symCross, fill=symCross!28, solid, line width=0.55pt,
%   boxplot prepared={
%     draw position=2,
%     lower whisker=3.1741985,
%     lower quartile=5.0053407,
%     median=7.3834875,
%     upper quartile=45.816594,
%     upper whisker=133.43069
%   }
% ] coordinates {};
\addplot [draw=symDiagonal, fill=symDiagonal!28, solid, line width=0.55pt,
  boxplot prepared={
    draw position=3,
    lower whisker=8.7291631,
    lower quartile=11.876424,
    median=17.313075,
    upper quartile=79.135513,
    upper whisker=181.22946
  }
] coordinates {};
\addplot [draw=symRowZero, fill=symRowZero!28, solid, line width=0.55pt,
  boxplot prepared={
    draw position=4,
    lower whisker=5.4634137,
    lower quartile=9.6951756,
    median=11.888714,
    upper quartile=115.05386,
    upper whisker=538.10142
  }
] coordinates {};
\addplot [draw=symRowOne, fill=symRowOne!28, solid, line width=0.55pt,
  boxplot prepared={
    draw position=5,
    lower whisker=5.7592821,
    lower quartile=10.410505,
    median=13.089793,
    upper quartile=81.091447,
    upper whisker=209.56147
  }
] coordinates {};
\addplot [draw=symRowTwo, fill=symRowTwo!28, solid, line width=0.55pt,
  boxplot prepared={
    draw position=6,
    lower whisker=5.6341857,
    lower quartile=10.633075,
    median=14.883507,
    upper quartile=63.272773,
    upper whisker=192.60754
  }
] coordinates {};
\addplot [draw=symRowThree, fill=symRowThree!28, solid, line width=0.55pt,
  boxplot prepared={
    draw position=7,
    lower whisker=4.7091615,
    lower quartile=9.7280651,
    median=13.941227,
    upper quartile=58.456043,
    upper whisker=131.55951
  }
] coordinates {};
\addplot [draw=symRowFour, fill=symRowFour!28, solid, line width=0.55pt,
  boxplot prepared={
    draw position=8,
    lower whisker=4.4672225,
    lower quartile=8.4297475,
    median=13.686133,
    upper quartile=58.194528,
    upper whisker=107.82477
  }
] coordinates {};
\end{groupplot}
\end{tikzpicture}
    \caption{Runtime comparison of different symmetry-breaking options}
    \label{fig:sb-plots}
\end{figure}

In \Cref{fig:scatterplot}, we show another way to visualize the difference between symmetry-breaking methods using scatterplots for a ``paired'' comparison; each point corresponds to the same encoding choices run once with row-1 symmetry breaking and once with diagonal symmetry breaking. Thus, points below the diagonal dotted line indicate encodings for which the diagonal symmetry breaking is faster.
In these plots, we additionally used the \textsf{scranfilize} tool of Biere and Heule~\cite{scranfilize}, which randomly permutes the order of clauses, as well as other syntactic aspects of the formula such as the variable ordering. For each encoding setting, we ran 5 independent ``scranfilized'' formulas---each datapoint in the plot represents the average runtime of these 5 formulas. Under this view, it can be appreciated how the choice of within-cell encoding induces clear clusters in the data, whereas the choice of encoding for the between-cells constraints does not show such a clear tendency.  While for $n=8$ there is little benefit to using exactly-one encodings, their impact increases with $n$, and for $n = 10$ they already account for one order of magnitude in runtime differences.

\begin{figure}[ht]
    \centering
    \begin{tikzpicture}
    \begin{groupplot}[
        group style={
            group size=2 by 2,
            horizontal sep=1.9cm,
            vertical sep=1.9cm,
        },
        width=5.5cm,
        height=5.5cm,
        scale only axis,
        title style={font=\bfseries\boldmath},
        xlabel={row 1 (s)},
        ylabel={diagonal (s)},
        xlabel style={yshift=5pt},
        ylabel style={yshift=-8pt},
        grid=both,
        major grid style={draw=gray!45, line width=0.35pt},
        minor grid style={draw=gray!22, line width=0.25pt},
        tick align=outside,
        legend style={, fill=none, mark options={scale=5.0}},
        legend cell align={left},
        clip=true,
        xmode=log,
        ymode=log,
    ]
    \input{figs/scatterplot_sb_8}
    \input{figs/scatterplot_sb_9}
    \input{figs/scatterplot_sb_10}
    \nextgroupplot[group/empty plot]
    \end{groupplot}
    \node at (group c2r2.center) {\scriptsize
      \begin{tabular}{@{}r@{\;}l@{\qquad}r@{\;}l@{}}
        \multicolumn{2}{@{}l}{\textbf{within-cells}} & \multicolumn{2}{l@{}}{\textbf{between-cells}} \\[3pt]
        \tikz\draw[fill=qcColor0, draw=white] (0,0) circle (3pt); & at-least-one &
        \tikz\node[draw=white, line width=0.25pt, fill=black, minimum size=6pt, inner sep=0pt] {}; & pairwise \\[2pt]
        \tikz\draw[fill=qcColor1, draw=white] (0,0) circle (3pt); & exactly-one pairwise &
        \tikz\draw[fill=black] (0,0) circle (3pt); & recursive \\[2pt]
        \tikz\draw[fill=qcColor2, draw=white] (0,0) circle (3pt); & exactly-one recursive & & \\[2pt]
        \tikz\draw[fill=qcColor4, draw=white] (0,0) circle (3pt); & order (at least one) & & \\[2pt]
        \tikz\draw[fill=qcColor3, draw=white] (0,0) circle (3pt); & order (exactly one) & & \\
      \end{tabular}
    };
    \end{tikzpicture}
    \caption{Scatterplots comparing row 1 versus diagonal symmetry breaking}
    \label{fig:scatterplot}
\end{figure}

\subsection{Clique Hints}

Since the usefulness of clique hints is one of our main experimental findings, we conducted additional experiments to analyze their impact further.  Namely, we considered specific sets of encoding choices, and instead of adding clique hints for each clique of maximum size and color, we only added them with a certain probability $p$ (independently for each clique hint). Thus, $p = 0$ corresponds to no clique hints, and $p = 1$ corresponds to adding all clique hints. As can be observed in~\Cref{fig:clique-hints}, which corresponds to $n = 9$ with the pairwise encoding for between-cells constraints and the exactly one pairwise encoding for within-cells, the runtime decreases exponentially as $p$ increases.

\begin{figure}
    \centering
    \begin{tikzpicture}
\pgfplotstableread{
p mean ci hints timeouts
0.50 27.323894002 6.230587320 89.000 0
0.55 10.668669446 3.175324043 101.100 0
0.60 6.073544681 1.607565766 107.150 0
0.65 3.062884973 0.604269579 116.300 0
0.70 1.580649944 0.400208149 128.300 0
0.75 0.924241800 0.251644153 134.150 0
0.80 0.427598398 0.071466677 143.100 0
0.85 0.293215354 0.063636462 153.100 0
0.90 0.223728160 0.022724357 161.600 0
0.95 0.173030965 0.017355942 171.600 0
1.00 0.113229139 0.000775459 180.000 0
}\datarowfour

\pgfplotstableread{
p mean ci hints timeouts
0.50 36.255142623 7.617560768 89.000 0
0.55 20.544462273 6.127555402 101.100 0
0.60 11.618079994 3.851575249 107.150 0
0.65 6.728129744 2.117613017 116.300 0
0.70 3.717949006 1.133202824 128.300 0
0.75 1.346969156 0.450696396 134.150 0
0.80 0.740842094 0.165918846 143.100 0
0.85 0.380955210 0.082459304 153.100 0
0.90 0.314152525 0.032791113 161.600 0
0.95 0.235225704 0.012532250 171.600 0
1.00 0.218890613 0.006110313 180.000 0
}\datarowone

\pgfplotstableread{
p mean ci hints timeouts
0.50 7.911753771 2.353588381 89.000 0
0.55 3.861150627 1.403803647 101.100 0
0.60 3.688035352 1.280273486 107.150 0
0.65 1.057415606 0.357132661 116.300 0
0.70 0.584891552 0.220056872 128.300 0
0.75 0.445143246 0.163973734 134.150 0
0.80 0.251147781 0.079785787 143.100 0
0.85 0.233779504 0.077725564 153.100 0
0.90 0.095688964 0.011670066 161.600 0
0.95 0.066236335 0.008523792 171.600 0
1.00 0.090843835 0.001445663 180.000 0
}\dataplus

\begin{axis}[
    width=10.5cm,
    height=7.5cm,
    ymode=log,
    ymin=0.05,
    ymax=100,
    xmin=0.475,
    xmax=1.025,
    xtick={0.50,0.55,0.60,0.65,0.70,0.75,0.80,0.85,0.90,0.95,1.00},
    xticklabels={0.5,0.55,0.6,0.65,0.7,0.75,0.8,0.85,0.9,0.95,1},
    xlabel={Hint inclusion probability $p$},
    ylabel={Runtime (s)},
    title={},
    ymajorgrids=true,
    yminorgrids=true,
    grid style={draw=gray!25},
    minor grid style={draw=gray!18},
    tick align=outside,
    legend style={
        at={(0.98,0.98)},
        anchor=north east,
        fill=white,
        fill opacity=0.88,
        draw=gray!35,
        font=\small,
    },
    legend cell align=left,
]

\addplot+[
    thick, mark=*, mark size=2.1pt, blue!75!black,
    error bars/.cd,
        y dir=both,
        y explicit,
]
table[x=p,y=mean,y error=ci] {\datarowfour};
\addlegendentry{row 4}

\addplot+[
    thick,  mark=square*, mark size=2.0pt, red!70!black,
    error bars/.cd,
        y dir=both,
        y explicit,
]
table[x=p,y=mean,y error=ci] {\datarowone};
\addlegendentry{row 1}

\addplot+[
    thick,  mark=triangle*, mark size=2.2pt, green!45!black,
    error bars/.cd,
        y dir=both,
        y explicit,
]
table[x=p,y=mean,y error=ci] {\dataplus};
\addlegendentry{plus}

\end{axis}
\end{tikzpicture}
    \caption{Effect of the inclusion of clique hints for $n = 9$, using $20$ independent trials per datapoint. Bars represent $95\%$ confidence intervals.}\label{fig:clique-hints}
\end{figure}

\subsection{On Brittleness}

While our statistical analysis of~\Cref{subsec:regression} measures the impact of different encoding choices \emph{on average}, we found some interesting cases in specific instances where encoding choices that generally do not have much impact showed large effects. For example, while for $n=9$, according to~\Cref{tab:regression-9}, the impact of the auxiliary placement is not even statistically significant (i.e., the confidence interval contains both positive and negative values), we found that in the absence of clique hints there are instances for which it makes a very large difference: using recursive cut-3 for the within-cell constraints, pairwise for the between-cell constraints, and row-1 symmetry breaking, the linear variant took 216 seconds whereas the pooled variant took 366 seconds. When using plus symmetry breaking, the linear variant took 34 seconds, whereas the pooled variant took 95. Furthermore, we observed that in these instances the \textsf{Kissat} flags \texttt{--forcephase=1 --phase=0} had a significant impact, which was not the case on average when applied to the entire dataset. This data is presented in~\Cref{tab:brittleness-aux-placement}.

\begin{table}[t]
\centering
\caption{Selected $n=9$ runtimes without clique hints, highlighting the brittleness of auxiliary placement.}
\label{tab:brittleness-aux-placement}
\begin{tabular}{llrr}
\toprule
 & & \multicolumn{2}{c}{Runtime (s)} \\
\cmidrule(l){3-4}
Symmetry breaking & Auxiliary placement & Default & Phase flags\\
\midrule
Row 1 & Linear & 216.72 &  43.31 \\
Row 1 & Pooled & 366.94 & 248.74 \\
Plus  & Linear & 34.75 & 31.07 \\
Plus  & Pooled & 95.35  & 98.23 \\
\bottomrule
\end{tabular}
\end{table}

Regarding a different form of brittleness, we also observed some interesting effects of clause ordering and other syntactic parameters (e.g., variable numbering). While the ordering of clauses in a CNF file makes no semantic difference, it has been documented already that the ordering of clauses can make an important difference in runtime~\cite{scranfilize, nikolicStatisticalMethodologyComparison2010a}. We observed a curious case of this phenomenon: due to the vertical symmetry of $n \times n$ grids, the formulas resulting from enforcing symmetry breaking on row $i$ are isomorphic to those enforcing symmetry breaking on row $n+1-i$ (which is why our bulk of experiments only considered rows $1$ to $\lceil n/2\rceil$), and yet, additional experiments including all possible rows reveal significant runtime differences between complementary pairs. As can be observed in~\Cref{fig:sym-reversal} (corresponding to $n=10$, with a pairwise encoding between cells, and exactly one pairwise within cells) the effect of random shuffling is not uniform: it makes, e.g., row 1 significantly better but row 9 significantly worse.  In general, without scranfilizing, the results are even further from symmetric around the middle, and in fact, the bottom 5 rows turn out to be roughly equivalent.

\begin{figure}
    \centering
    \begin{tikzpicture}
\pgfplotstableread{
row scran scranCi normal
1 5.8704 0.5469 7.3607
2 5.5246 0.3446 5.0488
3 6.0787 0.6031 4.0348
4 5.0843 0.6696 3.2284
5 4.5757 0.3944 4.4785
6 4.5585 0.2205 4.3023
7 5.6800 0.7653 4.3259
8 5.6191 0.2788 4.5218
9 7.0931 0.6240 4.7686
10 5.5693 0.2136 4.7675
}\symdata

\begin{axis}[
    width=14cm,
    height=5.5cm,
    ybar=0pt,
    bar width=8.0pt,
    ymin=0,
    ymax=8.5000,
    xmin=0.5,
    xmax=10.65,
    xtick={1,2,3,4,5,6,7,8,9, 10},
    xlabel={Symmetry-breaking row index},
    ylabel={Runtime (s)},
    title={},
    ymajorgrids=true,
    grid style={draw=qcGrid, line width=0.35pt},
    tick align=outside,
    axis line style={draw=qcInk},
    tick style={draw=qcInk},
    label style={font=\small},
    tick label style={font=\small},
    title style={font=\large\bfseries, yshift=1mm},
    legend style={
        at={(0.4,0.98)},
        anchor=north,
        legend columns=2,
        draw=none,
        fill=none,
        font=\small,
    },
    legend cell align=left,
    clip=false,
]

% \fill[qcGreen!8] (axis cs:3.5,0) rectangle (axis cs:5.5,8.5000);
% \node[anchor=south, font=\scriptsize\bfseries, text=qcGreen!60!black]
%     at (axis cs:4.5,8.0750) {central rows};

\addplot+[
    fill=blue!50,
    draw=blue!55!black,
    line width=0.45pt,
    error bars/.cd,
        y dir=both,
        y explicit,
        error bar style={draw=qcInk, line width=0.55pt},
        error mark options={draw=qcInk, line width=0.55pt, rotate=90, mark size=2.2pt},
]
table[x=row,y=scran,y error=scranCi] {\symdata};
\addlegendentry{scranfilized (20)\;}

\addplot+[
    fill=red!60,
    draw=red!55!black,
    line width=0.45pt,
]
table[x=row,y=normal] {\symdata};
\addlegendentry{no scranfilize}

% \draw[qcGreen!50!black, line width=0.45pt, densely dashed]
%     (axis cs:4.5,0) -- (axis cs:4.5,8.5000);

\end{axis}
\end{tikzpicture}
    \caption{Effect of vertical symmetry on symmetry breaking by rows, for $n = 10$. When using~\textsf{scranfilize}, we considered 20 independent trials. Bars represent $95\%$ confidence intervals.}
    \label{fig:sym-reversal}
\end{figure}

In general terms, we see these examples as a cautionary tale with regard to encodings, since they showcase how choices deemed minor can end up having large impacts on the runtime.

\section{Conclusion and Future Directions}

Besides yielding practical advice for how to encode graph coloring problems into SAT, we believe that our empirical methodology of studying SAT encodings ``at scale'' has the potential to improve our understanding of encodings more generally. Thus, we believe doing similar experiments for problems across several different domains would be fruitful. There are also many more parameters that could be varied in future analyses. For example, we kept the SAT solver settings fixed, but these can be important for some problems, and there could be interaction effects between encoding options and solver settings. Finally, we lack a framework for predicting and explaining how differences in encodings affect differences in runtimes. For instance, we would like to be able to explain (or even predict \emph{a priori}) why some symmetry-breaking methods are better than others and why their relative ordering changes between $n=9$ and $n=10$.

\subsection*{Acknowledgments}

This research is supported by the DARPA expMath program through the DARPA CMO contract number HR0011262E028 and by the
National Science Foundation (NSF) under grant DMS2434625.
Przybocki was additionally supported by the NSF Graduate Research Fellowship Program under Grant Nos. DGE-2140739 and DGE-2631988.

\bibliographystyle{plain}
\bibliography{references}

\end{document}